\documentclass[aps,prx,twocolumn,amsmath,amssymb,floatfix,superscriptaddress]{revtex4-2} 
\usepackage{amsmath,amssymb,amsfonts,graphicx,bm,array,physics}
\usepackage{natbib}

\usepackage{bbm}
\usepackage[outline]{contour}
\usepackage{color}
\usepackage{dcolumn}
\usepackage[utf8]{inputenc}
\usepackage{enumitem}
\usepackage{epstopdf}
\usepackage{amsthm}
\usepackage{empheq}
\usepackage{braket}
\usepackage{mathtools}
\usepackage[usenames,dvipsnames]{xcolor}
\usepackage{latexsym}
\usepackage{hyperref} 
\hypersetup{colorlinks = true,allcolors = blue }
\usepackage[capitalize]{cleveref}
\usepackage{diagbox} 
\usepackage[caption=false]{subfig}
\usepackage{multirow}
\usepackage{outlines}

\usepackage{epsf}

\usepackage{xspace}

\newcommand{\LP}[1]{\textcolor{green!50!black}{#1}}

\newtheorem{theorem}{Theorem}
\newtheorem{corollary}{Corollary}

\def\Sep{ \textnormal{Sep} }

\def\Incoh{ \textnormal{Incoh} }
\def\Coh{ \textnormal{Coh} }
\def\Mag{ \textnormal{Mag} }
\def\Stab{ \textnormal{Stab} }

\def\Rob{ \mathcal{R}_\mathcal{L} }

\def\Free{ \mathcal{L} }
\def\out{ \textnormal{out} }
\def\cost{ \textnormal{cost} }

\newcommand{\id}{\mathbbm{I}}

\newcommand{\bseq}{\begin{subequations}}
\newcommand{\eseq}{\end{subequations}}
\newcommand{\bsplit}{\begin{split}}
\newcommand{\esplit}{\end{split}}

\renewcommand{\tr}[1]{\operatorname{\textnormal{Tr}}\left( {#1} \right)}

\begin{document}

\newcommand{\lanl}{Quantum and Condensed Matter Physics Group (T-4), Theoretical Division, Los Alamos National Laboratory, Los Alamos, New Mexico 87545, USA}
\newcommand{\iiserk}{Department of Physical Sciences, Indian Institute of Science Education and Research Kolkata, Mohanpur, West Bengal 741246, India}
\author{Luis Pedro Garc\'ia-Pintos}
\email[Corresponding author: ]{lpgp@lanl.gov}
\affiliation{\lanl}
\author{Tanmoy Biswas}
\email{tanmoy.biswas23@gmail.com}
\affiliation{\lanl}
\author{Chandan Datta}
\email{chandan@iiserkol.ac.in}
\affiliation{\iiserk}

\title{
Robustness as a thermodynamic currency: \\ work advantages and preparation costs of nonclassical states
}

\date{\today}
\begin{abstract}
Understanding whether uniquely quantum features can provide concrete advantages in thermodynamic processes is a central objective of quantum thermodynamics. A key challenge is quantifying how different forms of non-classicality can be systematically harnessed to enhance thermodynamic tasks. In light of this, we prove that any form of non-classicality can serve as a thermodynamic resource. In particular, any system that possesses quantum magic, coherence, or non-classical correlations can be leveraged to extract higher amounts of work than if the system does not possess such resources. The quantum thermodynamic advantages—quantified by the ratio between work extractable from a resource state and work extractable in its absence—increase with the \emph{resource robustness}.
We show that for any convex quantum resource theory, any resourceful state can yield a work-extraction advantage over all free states via a cyclic quench/thermalization protocol whose Hamiltonian is engineered from an optimal robustness witness. We illustrate concrete examples in which the robustness measures increase with the system's dimension, yielding quantum thermodynamic advantages that scale with it. In contrast, we also show that preparing a resource state (e.g., one with magic, coherence, or non-classical correlations) can be significantly more thermodynamically costly than preparing any state without such a resource. Concretely, there always exists a protocol that can prepare any non-resourceful state at significantly less work than it takes to prepare a resourceful state. Overall, our results provide operational meaning to robustness measures of quantum resources in terms of their thermodynamic costs and advantages.
\end{abstract}
\maketitle

There is a long and fruitful tradition in quantum information theory of identifying the traits responsible for advantages in solving a task. 
In quantum computing, for instance, entanglement~\cite{JozsaLinden2003,
Vidal} and quantum magic (nonstabilizerness)~\cite{gottesman1998heisenberg, gottesman1997stabilizercodesquantumerror,
BravyiKitaev2005, PhysRevLett.116.250501} are necessary traits for potential quantum advantages as, otherwise, dynamics can be efficiently simulated classically. 
In applications to various fields, entanglement can be more directly identified as the resource for quantum advantages~\cite{Ekert1991, Bennett1993, Bollinger1996,GiovannettiPRL2006, PhysRevLett.102.250501,  zhao2025entanglement}, and coherence can serve as catalyst in state preparation~\cite{PhysRevLett.132.180202, PhysRevLett.132.200201}. 

In quantum thermodynamics, coherence~\cite{lostaglio2015description, Korzekwa_2016, AlmanzaMarrero2025certifyingquantum}, non-classical correlations~\cite{PhysRevE.87.042123, PhysRevLett.89.180402, PhysRevA.88.052319, francica2017daemonic,QThermoAdvPRL2025}, many-body effects~\cite{Jaramillo2016, PhysRevLett.124.210603}, and contextuality~\cite{PhysRevLett.125.230603} can be leveraged to enhance the work output of a process. The rate at which quantum batteries can be charged is also enhanced by entanglement between battery cells~\cite{Binder2015Quantacell, PhysRevLett.118.150601}. 
For a recent and more thorough review of examples where quantum traits enhance thermodynamic tasks, see Ref.~\cite{Campbell_2026}.

Despite such results, it is typically hard to precisely and quantifiably pinpoint an advantage to a particular quantum trait---often, many quantum properties coexist in different degrees.
Quantum resource theories aim to formalize such studies by quantifying a system's various resources~\cite{RobustnessReview, Lostaglio2019, gour2024resourcesquantumworld}. In such theories, information-theoretic quantities are introduced to measure resources. 
A popular family of measures, known as resource robustness measures, quantify the resilience of a resource when a state mixes with other states [see the definition in Eq.~\eqref{eq:Robustness} below]. Robustness measures find operational meanings in state and channel discrimination tasks~\cite{TakagiPRL,Oszmaniec2019operational,AsymmetryPiani, RobustnessGPT, TakagiPRX, PhysRevA.109.042403, Resorcetheory_Informativeness,BiswasWinterRobustness_Distinguishability,Cavalcanti_Skrzpczyk_Discrimination}, distillation of resourceful states~\cite{TakagiGoldenPRL2019, Zhou2020GeneralStateTransitions,Kondra2025stochastic}, and classical simulation costs of a quantum process~\cite{Howard, Heinrich_2019, SheldonPRXQ2021}.

In this article, we provide an operational interpretation of resource robustness measures in thermodynamic contexts.
We do so by showing that any quantum resource state can provide a quantum thermodynamic advantage in work extraction that is quantified by the resource robustness. %
We show there always exists a protocol that can exploit such resources to extract more work than is possible with access to non-resourceful (e.g., classical) states~\footnote{We use ‘classical’ and ‘non-classical’ as shorthand for ‘free’ and ‘resourceful’ relative to the resource theory under consideration. This is a resource-theory-dependent convention (not a unique notion of classicality). In familiar cases (e.g., coherence in a fixed basis, stabilizer/magic settings), free states admit a standard classical description or efficient classical simulation.}.

Conversely, we also show that preparing quantum resource states can be significantly costlier than preparing non-resourceful (free) states. Specifically, there always exist protocols that prepare non-resourceful states at lower costs than those for resourceful states. In this way, we relate degrees of non-classicality of quantum systems, their potential as thermodynamic resources, and the thermodynamic costs to prepare them.

Particularly relevant prior articles include Refs.~\cite{PhysRevLett.134.050401, junior2025tradingathermalitynonstabiliserness, biswas2025all, hsieh2024generalquantumresourcesprovide, hsieh2025complete}.
In Ref.~\cite{PhysRevLett.134.050401}, it was shown that heat flow can certify quantumness in a system, relating thermodynamic processes to quantum resources.
Reference~\cite{junior2025tradingathermalitynonstabiliserness} characterizes regimes in which coupling to a heat bath can generate magic—quantified by robustness measures—from stabilizer states.
In Ref.~\cite{biswas2025all}, we show that access to quantumly correlated (steerable) states can be used to cool a system more than with access to classically correlated states, with an advantage quantified by the steerability robustness.
Refs.~\cite{hsieh2024generalquantumresourcesprovide, hsieh2025complete} introduced energy extraction tasks that can certify general quantum resources (including for channels) and characterize processes of state conversion. They further showed how such tasks can be used to quantify resources via a maximized relative energy advantage over free states. 
Our results complement the constructions in Refs.~\cite{hsieh2024generalquantumresourcesprovide, hsieh2025complete} by providing closed-form robustness bounds for a standard free-energy work functional, and by relating robustness to preparation/implementation costs.
Then, the innovations of the current article are
\begin{itemize}[leftmargin=1.6em]
\vspace{-3pt}\item[(i)] relating and bounding the relative thermodynamic advantage in terms of the resource robustness via an explicit witness–Hamiltonian protocol (allowing exponential scaling with system size for families with exponentially growing robustness),
\vspace{-3pt}\item[(ii)] proving the near-total depletion of resources post work extraction from pure states, and 
\vspace{-3pt}\item[(iii)] establishing bounds on the work costs of generating resourceful states and resource-generating operations.
\end{itemize}

To quantify the quantum resources in a state, we consider \emph{robustness measures}~\cite{RobustnessEntanglement, Steiner2003, Cavalcanti_incompatibility, Napoli2016, RobustnessReview, TakagiPRL, RobustnessGPT}. Given a convex set $\Free$ of free, resource-less states, the (generalized) robustness is defined by
\begin{align}
\label{eq:Robustness}
\Rob( \rho )  &\coloneqq \min_
{0 \leq \gamma; \  \tr{\gamma} = 1} 
\left\{ s \geq 0  \;\Big\Vert \;   \rho_s \coloneqq \frac{\rho + s \gamma }{1+s}  \in \Free   \right\}.
\end{align}
It follows that $\Rob( \rho )=0$ if and only if $  \rho \in \Free$, and it is strictly greater than $0$ otherwise. 
That is, states contained in the free set $\Free$ have zero robustness, while states outside of $\Free$ have $\Rob( \rho ) > 0$. Robustness quantifies how much mixing a resource state can withstand while still being resourceful. Sufficient mixing eventually removes all resources from a state.

The robustness can be conveniently re-expressed in terms of the following dual semidefinite program~\cite{Brandao2005, TakagiPRL}
\begin{align}
\label{eq:SDP}
\Rob (\rho)
=
\max_{Y \geq 0} 
\left\{ \tr{Y \rho} - 1
\;\Big\Vert \;
\tr{Y \sigma} \le 1 \quad \forall\, \sigma \in \Free
\right\},
\end{align}
where $Y$ is Hermitian.
If the \emph{witness} $Y^*$ is a solution to the previous equation, then $\Rob(\rho) = \tr{Y^* \rho} -1$. Note that $Y^*$ depends on the state and the resource of interest (e.g., the optimal witnesses for coherence or entanglement typically differ). We consider the generalized robustness throughout this work, not to be confused with the standard robustness or other robustness metrics.

The definition of the free set $\Free$ characterizes the quantum resource of interest. The robustness can quantify different quantum resources by different choices of $\Free$. For instance, if $\Free_\Sep$ defines the set of separable states, the corresponding $\mathcal{R}_{\Free_\Sep}(\rho)$ quantifies a state's entanglement~\cite{RobustnessEntanglement, Steiner2003}. If $\Free_\Incoh$ includes all states without coherence in a given basis, then  $\mathcal{R}_{\Free_\Incoh}(\rho)$ quantifies coherence in said basis~\cite{Napoli2016, AsymmetryPiani, PhysRevA.98.022328}. Other choices of sets of free states and operations define resource theories for magic~\cite{Howard, Heinrich_2019, Winter2022}, non-Gaussianity~\cite{RobustnessNonGaussianity2025}, measurements~\cite{Resorcetheory_Informativeness,Oszmaniec2019operational}, steerability~\cite{SteeringRPP, Cavalcanti_incompatibility} and incompatibility~\cite{Designolle_2019,Cavalcanti_Skrzpczyk_Discrimination}.

In the remainder of this article, we relate the resources in a quantum system, as quantified by robustness measures, to the performance of thermodynamic tasks.

\section{Thermodynamic advantages from quantum resources}
\label{sec:advantages}

Consider a system with Hamiltonian $H$ in a non-thermal state $\rho$. The maximum \emph{extractable work} from such a state given access to a bath at inverse temperature $\beta$ is characterized by
\begin{align}
\label{eq:WorkOut}
    W_H^{\out}(\rho) \coloneqq F_H(\rho) - F_H\big(\tau_H^\beta \big).
\end{align}
Here, 
\begin{align}
\label{eq:FreeEnergy}
F_H(\rho) \coloneqq \tr{H \rho} - \beta^{-1} S(\rho) 
\end{align}
is the system's free energy, $S(\rho)$ its von Neumann entropy, and $\tau_H^\beta \coloneqq e^{-\beta H}/\tr{e^{-\beta H}}$ is the thermal state. $\mathcal{N} W_H^{\out}(\rho)$ is the amount of work that can be extracted from asymptotically many copies $\rho^{\otimes \mathcal{N}}$ of the state~\cite{horodecki2013fundamental, workextraction1,workextraction2,workextraction3,workextraction4}. Any non-thermal state satisfies $ W_H^{\out}(\rho) \geq 0$, and can thus serve as a thermodynamical resource for work extraction.
We consider work extraction protocols under thermal operations, which assume free access to ideal thermal baths at inverse temperature $\beta$, access to unitary operations between systems and bath that globally conserve energy, and the possibility to discard systems~\cite{workextraction1}. 

To compare the potential to extract work from a resource state $\rho$ and resource-less states, we define 
\begin{align}
\label{eq:xiout}
    \xi_{\out}(\rho) \coloneqq \max_{H \geq 0} \frac{W_H^{\out }(\rho)}{\max_{\sigma \in \Free} W_H^{\out}(\sigma)}.
\end{align}
$\xi_{\out}(
\rho
)$ compares how much work can be extracted from a resource state $\rho$ relative to the maximum extractable work from free states, maximized over possible Hamiltonians. 
We take the maximization over all Hamiltonians bounded from below, but additional control constraints (e.g., bounded energy or locality) would restrict the optimization and can reduce the achievable advantage.

The first main result of this article relates $\xi_\out(\rho)$ to the state's resource robustness, for any resource.
\begin{theorem}[Thermodynamic advantages from non-classicality]
\label{TheoremAdvantage}
Consider the set $\Free$ of free resource-less states and a 
resource state $\rho$ with robustness $\Rob(\rho)$, as defined by Eq.~\eqref{eq:Robustness}. The work that can be extracted given access to a thermal bath at inverse temperature $\beta$ satisfies
\begin{align}
\label{eq:QThermoAdvantage}
    \xi_{\out}(\rho) \geq 1 + 
    \frac{\Rob(\rho)  - \lambda^{-1}\beta^{-1}   S(\rho) }{1 + \lambda^{-1}\beta^{-1} S\big(\tau_{\lambda Y^*}^\beta \big)}, 
\end{align}
provided that $\lambda \geq \beta^{-1} S(\rho) / \Rob(\rho)$.
Here, $Y^*$ is an optimal robustness witness, $\lambda$ is a multiplicative Hamiltonian parameter that influences energy scales in the work extraction protocol, and $\tau_{\lambda Y^*}^\beta$ is the corresponding thermal state. 

At zero temperature, 
\begin{align}
\label{eq:QThermoAdvantageIdeal}
 \xi_\out(\rho) \nobreak \geq \nobreak 1 \nobreak + \nobreak  \Rob(\rho). 
\end{align}. 
\end{theorem}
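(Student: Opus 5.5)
Since $\xi_\out(\rho)$ is a maximum over Hamiltonians $H\ge 0$, it suffices to exhibit one good Hamiltonian. I would take $H=\lambda Y^*$ with $\lambda>0$, where $Y^*$ is an optimal witness in the dual program of Eq.~\eqref{eq:SDP}. Because $Y^*\ge 0$, this $H$ is admissible. The proof then reduces to an upper bound on $W_H^\out(\sigma)$ over $\sigma\in\Free$, a lower bound on $W_H^\out(\rho)$, and taking their ratio.

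\emph{Lower bound for the resource state.} By Eqs.~\eqref{eq:WorkOut}--\eqref{eq:FreeEnergy},
\begin{align}
W_{\lambda Y^*}^\out(\rho) = \lambda\tr{Y^*\rho} - \beta^{-1}S(\rho) - F_{\lambda Y^*}\big(\tau^\beta_{\lambda Y^*}\big).
\end{align}
Optimality of the witness gives $\tr{Y^*\rho}=1+\Rob(\rho)$. Hence the numerator equals $\lambda(1+\Rob(\rho)) - \beta^{-1}S(\rho) - F_{\lambda Y^*}(\tau^\beta_{\lambda Y^*})$.

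\emph{Upper bound for free states.} For $\sigma\in\Free$ the dual constraint gives $\tr{Y^*\sigma}\le 1$. Since $S(\sigma)\ge 0$, we get $W^\out_{\lambda Y^*}(\sigma)\le \lambda - F_{\lambda Y^*}(\tau^\beta_{\lambda Y^*})$ uniformly over $\Free$.

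\emph{Combining.} Write $F_\tau\coloneqq F_{\lambda Y^*}(\tau^\beta_{\lambda Y^*}) = \lambda\tr{Y^*\tau}-\beta^{-1}S(\tau)$. Then
\begin{align}
\xi_\out(\rho)\ \ge\ \frac{\lambda(1+\Rob(\rho))-\beta^{-1}S(\rho)-F_\tau}{\lambda - F_\tau}
= 1+\frac{\lambda\Rob(\rho)-\beta^{-1}S(\rho)}{\lambda-F_\tau}.
\end{align}
This rewriting requires $\lambda-F_\tau>0$. That holds because $F_\tau = -\beta^{-1}\ln Z \le \lambda\tr{Y^*\tau}$, and the denominator is at most $\lambda+\beta^{-1}S(\tau)$, which is positive.

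\emph{Main obstacle: matching the stated denominator.} Making the denominator larger only weakens the bound when the numerator is nonnegative, and the condition $\lambda\ge\beta^{-1}S(\rho)/\Rob(\rho)$ is exactly what ensures $\lambda\Rob(\rho)-\beta^{-1}S(\rho)\ge 0$. Using $-F_\tau = \beta^{-1}S(\tau)-\lambda\tr{Y^*\tau}\le \beta^{-1}S(\tau)$, which uses $Y^*\ge 0$, the denominator is at most $\lambda+\beta^{-1}S(\tau^\beta_{\lambda Y^*})$. Dividing numerator and denominator by $\lambda$ gives Eq.~\eqref{eq:QThermoAdvantage}.

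\emph{Zero temperature.} Take $\beta\to\infty$ with $\lambda$ fixed. The entropy terms carry $\beta^{-1}$ and vanish, since $S(\rho)$ and $S(\tau)\le\ln d$ stay bounded. The condition on $\lambda$ then holds for every $\lambda>0$ when $\Rob(\rho)>0$, and the bound becomes Eq.~\eqref{eq:QThermoAdvantageIdeal}. Equivalently, one can argue directly at $T=0$: there $F=\tr{H\cdot}$, the ground energy $\lambda y_{\min}$ is at least $0$, and the ratio $(1+\Rob-y_{\min})/(1-y_{\min})\ge 1+\Rob$.
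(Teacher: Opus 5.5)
Your proof is correct and is essentially the paper's argument. You use the same Hamiltonian $H=\lambda Y^*$, the same free-state bound via $\tr{Y^*\sigma}\le 1$ and $S(\sigma)\ge 0$, and your step of rewriting the ratio as $1+(\lambda\Rob(\rho)-\beta^{-1}S(\rho))/(\lambda-F_\tau)$ and then enlarging the denominator to $\lambda+\beta^{-1}S(\tau^\beta_{\lambda Y^*})$ is the unpacked form of the paper's inequality $(x-z)/(y-z)\ge x/y$ with $z=\lambda\tr{Y^*\tau^\beta_{\lambda Y^*}}$. One small repair: an upper bound on $\lambda-F_\tau$ cannot show it is positive, so use your own free-state chain instead, $0\le W^{\out}_{\lambda Y^*}(\sigma)\le\lambda-F_\tau$ for every $\sigma\in\Free$, which gives $\lambda-F_\tau\ge\max_{\sigma\in\Free}W^{\out}_{\lambda Y^*}(\sigma)>0$ unless that maximum vanishes (in which case the ratio is infinite and there is nothing to prove).
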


\begin{proof}
To prove Eq.~\eqref{eq:QThermoAdvantage}, we take a particular Hamiltonian, $H = \lambda Y^*$, where $\lambda > 0$ and $Y^* \geq 0$ is an optimal robustness witness that solves Eq~\eqref{eq:SDP}. Thus, $0 \leq \tr{H \sigma} = \lambda \tr{Y^* \sigma} \leq \lambda$ for any free state $\sigma \in \Free$, and $\tr{H \rho} =\lambda \tr{Y^* \rho} =\lambda \big(1+\Rob(\rho) \big)$ for a resourceful state. Let $\sigma_{\lambda Y^*} \in \Free$ be the free state that maximizes $W^\out_{H = \lambda Y^*}(\sigma)$.  Then, combining the previous two inline expressions with Eq.~\eqref{eq:xiout} gives that 
\begin{align}
\label{eq:Thm1Auxa}
    \xi_{\out}(\rho) 
  &\geq \frac{ \lambda \big( 1+\Rob(\rho) \big) - \beta^{-1} S(\rho) -  F_{\lambda Y^*}\big(\tau_{\lambda Y^*}^\beta \big) }{ \lambda -  \beta^{-1} S(\sigma_{\lambda Y^*}) - F_{\lambda Y^*}\big(\tau_{\lambda Y^*}^\beta \big) }  \nonumber \\
    &\geq \frac{ \lambda \big( 1+\Rob(\rho) \big) -  \beta^{-1} S(\rho) -  F_{\lambda Y^*}\big(\tau_{\lambda Y^*}^\beta \big) }{ \lambda - F_{\lambda Y^*}\big(\tau_{\lambda Y^*}^\beta \big) }.
\end{align}
The last inequality holds because the extractable work in Eq.~\eqref{eq:WorkOut} is non-negative, so the expression's denominator satisfies $0 \leq W^\out_{\lambda Y^*}(\sigma_{\lambda Y^*}) \leq \lambda - \beta^{-1} S(\sigma_{\lambda Y^*}) - F_{\lambda Y^*}\big(\tau_{\lambda Y^*}^\beta \big) \leq \lambda - F_{\lambda Y^*}\big(\tau_{\lambda Y^*}^\beta \big)$. 

Assume that $S(\rho) \leq \lambda \beta \Rob(\rho)$. Then, using that $(x-z)/(y-z) \geq x/y$ holds if $y-z \geq 0$, $z > 0$, and $x>y$, 
setting $x \equiv \lambda (1 + \Rob(\rho)) - \beta^{-1} S(\rho) + \beta^{-1} S(\tau_{\lambda^*}^\beta)$, $y \equiv \lambda + \beta^{-1} S(\tau_{\lambda Y^*}^\beta) $, and $z \equiv \lambda \tr{Y^* \tau_{\lambda Y^*}^\beta} \geq 0$, 
we obtain that 
\begin{align}
\label{eq:Thm1Auxb}
    \xi_{\out}(\rho) & \geq \frac{ \lambda \big( 1+\Rob(\rho) \big) - \beta^{-1} S(\rho) +  \beta^{-1} S\big(\tau_{\lambda Y^*}^\beta \big) }{ \lambda +  \beta^{-1} S\big(\tau_{\lambda Y^*}^\beta \big) } 
    \nonumber \\
    &= 1 + \frac{ \Rob(\rho)  - \lambda^{-1} \beta^{-1} S(\rho)}{ 1 + \lambda^{-1} \beta^{-1} S\big(\tau_{\lambda Y^*}^\beta \big) }.
\end{align}
This proves Eq.~\eqref{eq:QThermoAdvantage}. The entropic terms disappear at zero temperature, and $\xi_{\out}(\rho) \geq 1+ \Rob(\rho)$ holds then.

\end{proof}

Theorem~\ref{TheoremAdvantage} shows that a quantum thermodynamic advantage can be obtained from leveraging the non-classical resources in a state $\rho$. It also provides an operational meaning to the resource robustness in a thermodynamic context. The resource robustness quantifies the maximum advantage in work extraction from resourceful states and, as we describe in the next section, a protocol with a tailored engineered Hamiltonian can output work that grows with the state's robustness. 
Since the robustness can scale with the dimension of the state's Hilbert space, such an advantage can be exponential in system size. For a bounded Hamiltonian, $\|H \| \leq E$, the entropic corrections mean the advantage is capped; our bounds show how robustness competes with temperature and energy scales. 

\section{Work extraction from resource states}
\label{sec:examples}
\begin{figure*}[htbp]
    \centering
    \includegraphics[width=14cm]{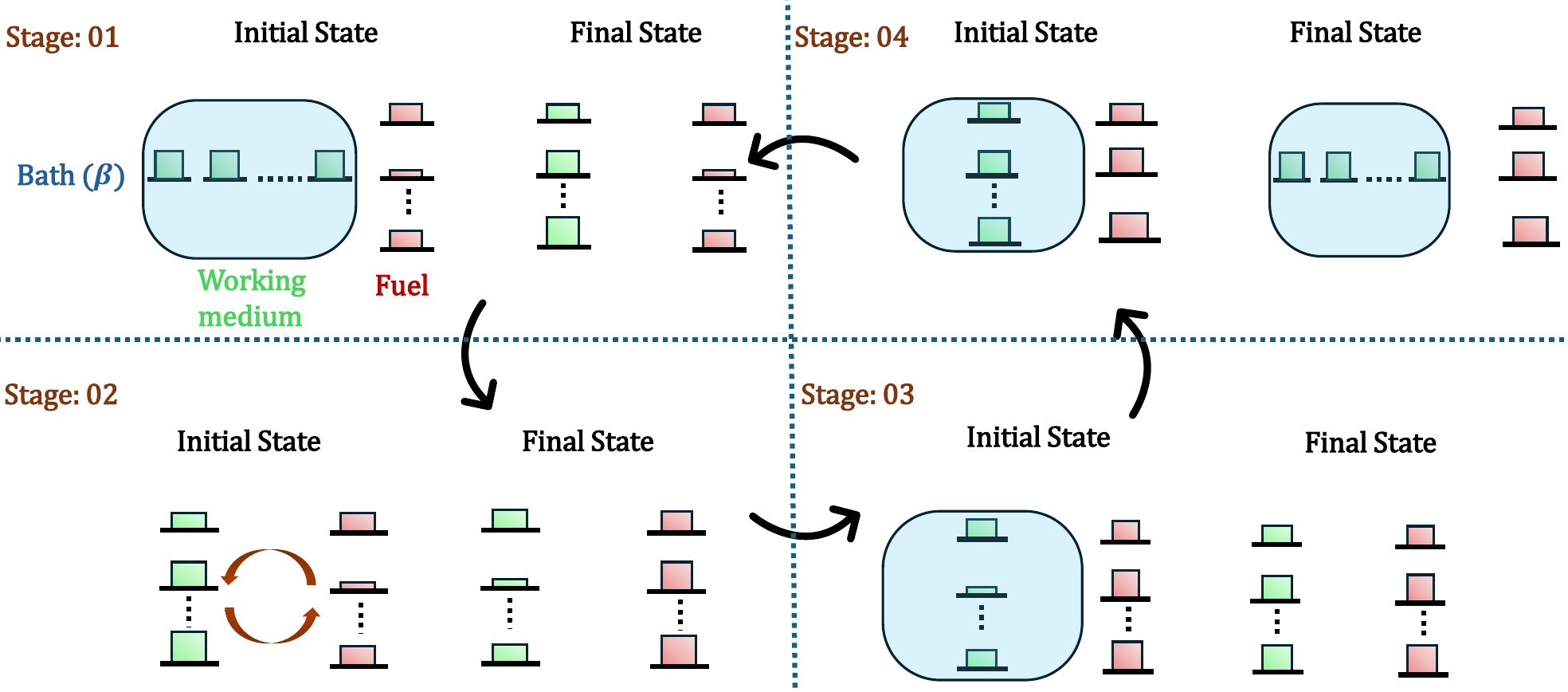}
    \caption{\label{fig:Work_extraction_protocol} 
This figure illustrates the work-extraction protocol described in Sec.~\ref{sec:examples}. In Stage~1 (Initial State), the working medium, characterized by a null Hamiltonian, is thermalized at the background bath at inverse temperature $\beta$, while the fuel system contains the resource state $\rho$. The Hamiltonian of the working medium is changed reversibly and isothermally, while maintaining contact with the bath, from $H=0$ to $H=\lambda Y^*$. 
In Stage~2, the system is isolated from the bath and a global swap operation is performed between the working medium and the fuel; this operation is energy-conserving and therefore incurs no work cost or gain. In Stage~3, contact with the bath is re-established and the working medium rethermalizes at inverse temperature $\beta$. Finally, in Stage~4, the Hamiltonian of the working medium is changed reversibly and isothermally, while remaining in contact with the bath, from $H=\lambda Y^*$ back to $H=0$, thereby restoring the initial configuration and enabling the protocol to be repeated with a fresh fuel state. The protocol outputs more work if the input state is nonclassical than if the input state is resourceless, with a relative advantage that increases with the resource robustness $\Rob(\rho)$.}
\end{figure*}

The proof of Theorem~\ref{TheoremAdvantage} suggests concrete work extraction protocols that yield thermodynamic advantages from resourceful states.
Consider a system that acts as a working medium and a fuel system. 
The fuel system initially stores the resource state $\rho$ and Hamiltonian $H_{\textnormal{fuel}} = \lambda Y^*$, where $Y^*$ is an optimal robustness witness for $\rho$. The working medium starts with the trivial Hamiltonian $H = 0$ and at a thermal state $\tau^\beta_0 = \id/d$.

The following cyclic protocol, illustrated in Fig.~\ref{fig:Work_extraction_protocol}, spends the fuel's resources to extract work: 
\begin{itemize}

    \item(a) Keeping the medium at thermal equilibrium with the bath, reversibly and isothermally change its Hamiltonian to $H = \lambda Y^*$. The net work change in this process is $\Delta W_{(a)} = F_0\big( \tau^\beta_0 \big) - F_{\lambda Y^*} \big( \tau^\beta_{\lambda Y^*} \big)$ (Stage 01 in Fig.~\ref{fig:Work_extraction_protocol}). 
    
    \item(b) Load the fuel's resource state $\rho$ onto the medium. Since fuel and working medium have identical Hamiltonians, this can be done by a globally energy preserving swap operation, so $\Delta W_{(b)} = 0$ (Stage 02 in Fig.~\ref{fig:Work_extraction_protocol}). 

    \item(c) Couple the working medium to the bath. If performed optimally, this process extracts $\Delta W_{(c)} = F_{\lambda Y^*}(\rho) - F_{\lambda Y^*}(\tau^\beta_{\lambda Y^*})$ work~\cite{workextraction1,workextraction2,workextraction3,workextraction4} (Stage 03 in Fig.~\ref{fig:Work_extraction_protocol}). 

    \item(d) Reversibly reset the working medium's Hamiltonian back to $H = 0$, at a cost $\Delta W_{(d)} = F_{\lambda Y^*}\big( \tau^\beta_{\lambda Y^*} \big) - F_{0} \big( \tau_{0}^\beta \big) \equiv - \Delta W_{(a)}$. This brings the medium back to its initial conditions, ready to accept another resource state for work extraction (Stage 04 in Fig.~\ref{fig:Work_extraction_protocol}). 
\end{itemize}

For any state $\rho$, such a sequence of quenches and thermalization yields an energy output 
\begin{align}
    W^\out(\rho) &\coloneqq \Delta W_{(a)} +  \Delta W_{(b)} +     \Delta W_{(c)} +     \Delta W_{(d)}  \\
    &= F_{\lambda Y^*}(\rho) - F_{\lambda Y^*}(\tau^\beta_{\lambda Y^*}) \nonumber \\
&=    \lambda \big( 1+\Rob(\rho) \big) - \beta^{-1} S(\rho) - F_{\lambda Y^*} \left(\tau^\beta_{\lambda Y^*}\right). \nonumber
\end{align}
Then, Theorem~\ref{TheoremAdvantage} implies that the relative work output between a resource state $\rho$ and a free state $\sigma \in \Free$ for such a protocol satisfies
\begin{align}
\label{eq:OutputWorkExtraction}
    \frac{W^\out(\rho)}{W^\out(\sigma)} &= \frac{\lambda \big( 1+\Rob(\rho) \big) - \beta^{-1}S(\rho) - F_{\lambda Y^*} \left(\tau^\beta_{\lambda Y^*}\right)}{ \lambda \big( 1+\Rob(\sigma) \big) - \beta^{-1}S(\sigma) - F_{\lambda Y^*} \left(\tau^\beta_{\lambda Y^*}\right)}  \nonumber \\
    &\gtrsim  1+\Rob(\rho),
\end{align}
for large enough $\lambda \beta$. Since the von Neumann entropy of a $d$-dimensional system satisfies $S \leq \log d$, it is enough to take $\lambda \beta \gg \log d$ to recover an advantage that scales with $\Rob(\rho)$.

\subsection*{Optimal witnesses}

Implementing the protocol described above requires knowledge of optimal robustness witnesses. Finding $Y^*$ for arbitrary states $\rho$ of high-dimensional systems is typically intractable in practice (it involves a semidefinite program, which is efficient to simulate, but intractable for large number of qubits, for instance). 
However, for pure states $\rho = \ket{\psi}\!\bra{\psi}$, an optimal witness can be expressed in terms of a rank-1 projector~\cite{Regula2018Convex, TakagiPRL, PhysRevA.98.022328},
\begin{align}
\label{eq:OptimalWitness}
Y^* = c \ket{y}\!\bra{y}, 
\end{align}
where $\ket{y}$ is a normalized state. 
The state's robustness is $\Rob(\rho) = c \, |\!\bra{y} \! \psi \rangle|^2 - 1$.

After the work extraction protocol with such an optimal witness, the system ends in the thermal state $\tau^\beta_{\lambda Y^*} = \tau^\beta_{\lambda c \ket{y}\!\bra{y}}$. 
Does this state contain any resources from which one could extract work (at the same fixed inverse temperature)? The following result answers this question in the negative.
\begin{theorem}
\label{TheoremThermal}
Consider a $d$-dimensional pure state $\rho = \ket{\psi}\!\bra{\psi}$ with robustness $\Rob(\rho)$ with respect to the free set $\Free$. Let $Y^* = c \ket{y}\!\bra{y}$ be an optimal witness for such a resource. After an optimal work extraction protocol where the system ends in a thermal state $\tau^\beta_{\lambda Y^*}$, it holds that
\begin{align}
    \mathcal{R}_{\Free'}\big( \tau^\beta_{\lambda Y^*} \big) \leq \frac{1}{d-1},
\end{align}
where $\mathcal{R}_{\Free'}$ is the robustness of \emph{any} resource theory for which the maximally mixed state is free, $\id/d \in \Free'$. 
\end{theorem}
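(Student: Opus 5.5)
The thermal state has a simple explicit form, and we can bound its robustness by mixing it with a well-chosen state until it becomes maximally mixed. Since $Y^* = c\ket{y}\!\bra{y}$ with $c>0$, the Hamiltonian $\lambda Y^*$ has eigenvalue $\lambda c$ on $\ket{y}$ and $0$ on the orthogonal complement. Hence
\begin{align}
\tau^\beta_{\lambda Y^*} = p \ket{y}\!\bra{y} + \frac{1-p}{d-1}\left(\id - \ket{y}\!\bra{y}\right), \qquad p = \frac{e^{-\beta\lambda c}}{e^{-\beta\lambda c} + d - 1}.
\end{align}
The key observation is that $p \le 1/d$, because $e^{-\beta\lambda c}\le 1$. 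Equivalently, the weight on $\ket{y}$ is at most that of the maximally mixed state, and each of the other $d-1$ eigenvalues $(1-p)/(d-1)$ is at least $1/d$.

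Next we use the primal definition in Eq.~\eqref{eq:Robustness}. Since $\id/d \in \Free'$, it suffices to find a state $\gamma$ and an $s\ge 0$ such that $(\tau + s\gamma)/(1+s) = \id/d$; this gives $\mathcal{R}_{\Free'}(\tau)\le s$. The natural choice is $\gamma = \ket{y}\!\bra{y}$, which fills the deficit in the $\ket{y}$ direction. We need
\begin{align}
\frac{p+s}{1+s} = \frac1d, \qquad \frac{(1-p)/(d-1)}{1+s} = \frac1d .
\end{align}
Both conditions give the same solution,
\begin{align}
s = \frac{1-dp}{d-1},
\end{align}
so they are consistent. Moreover $s\ge 0$ because $p\le 1/d$, and $s\le 1/(d-1)$ because $p\ge 0$. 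Therefore $\mathcal{R}_{\Free'}(\tau^\beta_{\lambda Y^*}) \le (1-dp)/(d-1) \le 1/(d-1)$.

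The only point needing care is justifying that $c>0$ and that the thermal state has exactly this two-level spectral structure. This follows because $Y^*\ge 0$ is nonzero: indeed $\tr{Y^*\rho} = 1+\Rob(\rho)\ge 1$. Beyond that, the argument is a direct computation. We should also remark that this bound holds uniformly in $\lambda$, $\beta$, and even $\ket{\psi}$, and that it is approached as $\beta\lambda c\to\infty$.
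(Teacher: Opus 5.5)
Your proof is correct, but it works on the primal side of the robustness, whereas the paper works on the dual side.

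\textbf{The paper's route.} It takes an optimal witness $Z$ for $\tau^\beta_{\lambda Y^*}$. From $\tr{Z\,\id/d}\le 1$ it gets $\tr{Z}\le d$, and from $\bra{y}Z\ket{y}\ge 0$ it can drop the $\ket{y}\!\bra{y}$ term in the explicit thermal state. This gives
\[
\mathcal{R}_{\Free'}(\tau^\beta_{\lambda Y^*})\le \frac{d}{d-1+e^{-\beta\lambda c}}-1 .
\]

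\textbf{Your route.} You exhibit an explicit feasible decomposition, $\gamma=\ket{y}\!\bra{y}$ with $s=(1-dp)/(d-1)$, which mixes the thermal state into $\id/d$.

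\textbf{Relation between the two.} The intermediate bounds coincide exactly:
\[
\frac{1-dp}{d-1}=\frac{1-e^{-\beta\lambda c}}{d-1+e^{-\beta\lambda c}} .
\]
This is because both arguments are two readings of the single operator inequality $\tau^\beta_{\lambda Y^*}\le (1+s)\,\id/d$, where $1+s=d\,\|\tau^\beta_{\lambda Y^*}\|_\infty$. You read it as $(1+s)\id/d-\tau^\beta_{\lambda Y^*}=s\,\ket{y}\!\bra{y}\ge 0$. The paper pairs it with $Z\ge 0$ to get $\tr{Z\tau^\beta_{\lambda Y^*}}\le (1+s)\tr{Z}/d\le 1+s$. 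Either way, the underlying fact is that $\mathcal{R}_{\Free'}(\tau)\le d\|\tau\|_\infty-1$ for any state whenever $\id/d\in\Free'$.

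\textbf{What each buys.} Your argument is fully elementary and constructive. It uses only the primal definition in Eq.~\eqref{eq:Robustness}, with no appeal to strong duality or to the existence of an optimal witness. It also has a transparent physical reading: a small admixture of $\ket{y}$ restores the maximally mixed state. The paper's version is shorter in context because it reuses the witness machinery that runs through the rest of the paper.

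\textbf{One caveat.} Your closing remark that the bound ``is approached'' as $\beta\lambda c\to\infty$ holds for the intermediate quantity $(1-dp)/(d-1)$. It does not hold for $\mathcal{R}_{\Free'}$ itself in general. For example, the robustness is identically zero if $\Free'$ contains all states. Tightness therefore depends on the particular free set.
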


\begin{proof}
    The thermal state for a rank-one Hamiltonian $H = \lambda Y^* = c \lambda \ket{y}\!\bra{y}$ is
\begin{align} 
\label{eq:AuxThermal}
 \tau^\beta_{\lambda Y^*}   &= \frac{ \id - \left(1-  e^{-\beta \lambda c} \right)  \ket{y}\!\bra{y}  }{d -\left(1-  e^{-\beta \lambda c} \right) }.
\end{align}
Consider a resource theory defined by the free set $\Free'$ such that $\id/d \in \Free'$ (e.g., for coherence, entanglement, and magic). Let $Z$ be an optimal witness of state $\tau^\beta_{\lambda Y^*}$ for such a resource. Since the maximally mixed state is resourceless, then $\tr{ Z \id/d} \leq 1$. Recall, too, that $Z \geq 0$, so $- \tr{Z \ket{y}\!\bra{y}} \leq 0$.
Then, using Eq.~\eqref{eq:AuxThermal} we find that
\begin{align}
    \mathcal{R}_{\Free'} \left( \tau^\beta_{\lambda Y^*} \right) &= \tr{Z \tau^\beta_{\lambda Y^*}} - 1  \nonumber \\
    &=  \frac{ \tr{Z\id} - \left(1-  e^{-\beta \lambda c} \right)  \tr{ Z \ket{y}\!\bra{y}}  }{d -\left(1-  e^{-\beta \lambda c} \right) } - 1  \nonumber \\
    & \leq  \frac{d}{d -\left(1-  e^{-\beta \lambda c} \right) } - 1 \leq \frac{1}{d-1}.
\end{align}
\end{proof}

Theorem~\ref{TheoremThermal} shows that virtually no resources (for theories where the maximally mixed state is free) remain at the end of the work extraction protocol described above for high-dimensional systems. After extracting work from the robustness $\Rob(\rho)$ from a resource characterized by the free set $\Free$, little work can be extracted from any resource such that $\id/d\in \Free'$ for large $d$. Then, the optimal work-extraction approach is to identify the type of resource $\Free$ for which $\mathcal{R}_{\Free}(\rho)$ is maximum and apply the optimal protocol for such a resource theory.

This pinpoints the problem of extracting work from resourceful states to identifying optimal witnesses $Y^*$ that can be used to construct a Hamiltonian $H \propto Y^*$ and quantifying the corresponding resource robustness.

\vspace{8pt}
\noindent  {\emph{Example: work extraction from coherent states}.---}The robustness of coherence is characterized by the free set $\Free_\Incoh$ by states incoherent in a given basis. Coherent states then have non-zero $\mathcal{R}_{\Free_\Incoh}$~\cite{Napoli2016, AsymmetryPiani, PhysRevA.98.022328}.

Consider coherence in the computational basis and a pure state $\rho = \ket{\psi}\!\bra{\psi}$, where $\ket{\psi} = \sum_{j=1}^d c_j \ket{j}$. The robustness of coherence is $\mathcal{R}_{\Free_\Incoh}(\rho) = \big( \sum_{j=1}^d |c_j|\big)^2 -  1$~\cite{Napoli2016}. The optimal witness, a rank-one operator as in Eq.~\eqref{eq:OptimalWitness}, simplifies to $Y^* \propto \ket{\psi}\!\bra{\psi}$ for maximally-coherent states with $c_j = 1/\sqrt{d}$ (known as golden states~\cite{TakagiGoldenPRL2019}). Then, $\mathcal{R}_{\Free_\Incoh}(\rho) = d-1$. The protocol described at the beginning of this section would then discriminate the maximally coherent state from incoherent states by a relative work extraction ratio that grows linearly with the number of states in the superposition.

\vspace{8pt}
\noindent {\emph{Example: work extraction from magic states.---}}Let $\Free_\Stab$ denote the set of stabilizer states. An $N$-qubit state $\ket{\psi}$ is a stabilizer state if it is the joint eigenstate with eigenvalue $+1$ of $N$ independent and commuting Pauli operators $\{g_1,\dots,g_N\} \in \mathcal{P}_N$, where $\mathcal{P}_N$ denotes the Pauli group. That is, $g_k  \ket{\psi} = \ket{\psi}$ for $ k=1,\dots, N$. A mixed state $\rho$ is a stabilizer state if it can be expressed as a convex combination of pure stabilizer states. $\mathcal{R}_{\Free_\Stab}(\rho)$ quantifies the non-stabilizerness, or magic, in a state $\rho$~\cite{PhysRevLett.115.070501, Howard, Heinrich_2019, Winter2022}.

Stabilizer states can be efficiently prepared by Clifford circuits, and Clifford circuits can be efficiently simulated  classically~\cite{gottesman1998heisenberg, PhysRevA.70.052328}. Thus, quantum magic is a necessary ingredient for quantum computational advantages over classical methods.  
In fact, the generalized robustness of magic relates to the runtime of certain classical algorithms to simulate a quantum process---the higher the magic, the harder it is for such classical algorithms to simulate the quantum process~\cite{Howard, Heinrich_2019, SheldonPRXQ2021}.

Consider $N$ copies of a $\ket{T}$ state, $\ket{\psi} = \ket{T}^{\otimes N}$. $\ket{T}$ states are magic states obtained from applying the non-Clifford gate $T$ [$R_z(\pi/4)$] on a stabilizer state,
\begin{align}
    \ket{T} \coloneqq T \ket{+} = \left(\ket{0} + e^{i\pi/4}\ket{1} \right)/\sqrt{2}.
\end{align}
In turn, $\ket{T}$ states can be used to implement $T$ gates by a Clifford operation. Adding $T$ gates to the Clifford group makes a circuit universal, i.e., capable of approximating any unitary operation, and no longer classically simulable. $T$ gates are harder to implement than Clifford operations, thus the value of $\ket{T}$ states in quantum computing~\cite{BravyiKitaev2005, PhysRevX.2.041021}.

Reference~\cite{SheldonPRXQ2021} shows that an optimal witness of magic for any single-qubit magic state takes the form
\begin{align}
Y_1^* = \frac{\id + (X+Y)/\sqrt{2}}{1+ 1/\sqrt{2}},
\end{align}
where 
$\{X,Y,Z\}$ are the Pauli matrices. A simple optimization over $q$ yields the optimal witness for $\ket{T}$ states. 
Ref.~\cite{SheldonPRXQ2021} also shows that $\big( \mathcal{R}_{\Free_\Stab}+1 \big)$ is multiplicative over single-qubit states, which implies that an optimal witness for $\ket{T}^{\otimes N}$ is a tensor product, $Y_N^* = \bigotimes_{j=1}^N Y_j^*$ where $Y_j^*=Y_1^*$ for all $j$. 
Then, the robustness of $N$ copies of a $T$ state is
$\mathcal{R}_{\Free_\Stab}\big(\ket{T}\!\bra{T}^{\otimes N}\big) = \left( 4 - 2\sqrt{2} \right)^N -1 \approx 1.1716^N-1$.

Using $\ket{T}\!\bra{T}^{\otimes N}$ as inputs to the work extraction protocol described in the previous section would yield a separation $\xi_{\out}\big(\ket{T}\!\bra{T}^{\otimes N}\big)\sim 1.1716^N$ with the extractable work from stabilizer states. 
However, this may not be the optimal approach to extract work! 

The generalized robustness of coherence in the computational basis of $\ket{T}\!\bra{T}^{\otimes N}$ is $\mathcal{R}_{\Free_\Incoh} = 2^N - 1$, which can be seen from the expression for the generalized robustness of coherence in the computational basis described above. Then, extracting work from $\ket{T}\!\bra{T}^{\otimes N}$ by using coherence as the resource, with the corresponding optimal witness for the robustness of coherence, yields a separation $\xi_{\out}\big(\ket{T}\!\bra{T}^{\otimes N}\big)\sim 2^N - 1$, higher than the one obtained from the optimal magic witness. 

The best work-extraction protocol requires identifying the type of resource that the state $\rho$ possesses the most. Once the optimal resource is known, the optimal protocol involve strokes of thermalization and quenches to a Hamiltonian proportional to the optimal resource witness.

In the previous examples, we considered optimal resource witnesses. However, even sub-optimal resource witnesses $Y$ such that $0 < \tr{Y \rho} - 1 \leq \tr{Y^* \rho} -  = \Rob(\rho)$ yield an advantage in work extraction from resourceful states. This can be seen, for instance, by using the sub-optimal Hamiltonian $H' = \lambda Y$ in the protocol described at the beginning of this section. 

\section{Thermodynamic costs of quantum resources}
\label{sec:costs}

In the previous two sections, we showed that quantum resource states can yield advantages versus resourceless states in work extraction tasks. However, there is a natural counter-balance of such results in terms of the thermodynamic resources necessary to prepare resources.

The minimum \emph{work cost} required to prepare a state $\rho$ from a thermal state is given by
\begin{align}
    W_H^{\cost}(\rho) =  F_H(\rho) - F_H\big(\tau_H^\beta \big),
\end{align}
positive whenever $\rho$ is not the thermal state $\tau_H^\beta$~\cite{workextraction1,workextraction2,workextraction3,workextraction4}.  

To compare the energetic resources needed to prepare free and quantum states from thermal states, let us consider a quantity analogous to Eq.~\eqref{eq:xiout}:
\begin{align}
    \xi_{\cost}(\rho) \coloneqq \min_{H \geq 0 } \frac{\max_{\sigma \in \Free} W_H^{\cost}(\sigma)}{W_H^{\cost}(\rho)}.
\end{align}
$\xi_{cost}$ compares the minimum relative work cost to prepare the most costly free state ($\max_{\sigma \in \Free} W_H^{cost}(\sigma)$) to the cost to prepare a quantum state ($W_H^{cost}(\rho)$), minimized over all Hamiltonians. If $\xi_{cost}(\rho)$ is small, it means that there exists a resourceless state that is much less costly to prepare than the quantum resourceful state, under some protocol. 

Then, the next result follows from Theorem~\ref{TheoremAdvantage}.
\begin{corollary}[Thermodynamic costs of non-classicality]
\label{CorollaryCost}
Consider the set $\Free$ of free resource-less states and a resource state $\rho$ with robustness $\Rob(\rho)$. 
The work cost required to prepare $\rho$ given access to a thermal bath at inverse temperature $\beta$ satisfies
\begin{align}
\label{eq:QThermoCost}
    \xi_{\cost}(\rho) \leq  \frac{ 1 +  \lambda^{-1}\beta^{-1} S(\tau_{\lambda Y^*}^\beta) }{ 1 + \Rob(\rho) + \lambda^{-1}\beta^{-1} \left( S(\tau_{\lambda Y^*}^\beta) - S(\rho) \right)  },
\end{align}
with $\lambda \geq \beta^{-1} S(\rho) / \Rob(\rho)$.
Here, $Y^*$ is an optimal robustness witness, $\lambda$ is a multiplicative Hamiltonian parameter that influences energy scales in the work extraction protocol, and $\tau_{\lambda Y^*}^\beta$ is the corresponding thermal state.

At zero temperature,
\begin{align}
\label{eq:QThermoCostIdeal}
    \xi_\cost(\rho) \nobreak \leq \nobreak \frac{1}{1+\Rob(\rho)}.
\end{align}

\end{corollary}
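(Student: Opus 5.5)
The plan is to observe that $W_H^{\cost}$ and $W_H^{\out}$ are given by the same expression, $F_H(\cdot)-F_H(\tau_H^\beta)$. Hence, for every Hamiltonian $H$ for which the ratios are well defined, the ratio inside $\xi_\cost$ is exactly the reciprocal of the ratio inside $\xi_\out$:
\begin{align}
\frac{\max_{\sigma\in\Free} W_H^{\cost}(\sigma)}{W_H^{\cost}(\rho)} = \left( \frac{W_H^{\out}(\rho)}{\max_{\sigma\in\Free} W_H^{\out}(\sigma)}\right)^{-1}.
\end{align}
Since $\xi_\cost$ is a minimum over $H\geq 0$, any single admissible choice of $H$ gives an upper bound. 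I will choose the same Hamiltonian used in the proof of Theorem~\ref{TheoremAdvantage}, namely $H=\lambda Y^*$, where $Y^*\geq 0$ is an optimal witness from Eq.~\eqref{eq:SDP} and $\lambda>0$. This gives $\xi_\cost(\rho)\leq \big[ W^\out_{\lambda Y^*}(\rho)/\max_{\sigma\in\Free}W^\out_{\lambda Y^*}(\sigma)\big]^{-1}$.

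The key step is to reuse the chain of inequalities in Eqs.~\eqref{eq:Thm1Auxa}--\eqref{eq:Thm1Auxb}. These were derived for this fixed Hamiltonian, not just for the maximized quantity $\xi_\out$. They show that, for $\lambda\geq\beta^{-1}S(\rho)/\Rob(\rho)$, the ratio $W^\out_{\lambda Y^*}(\rho)/\max_\sigma W^\out_{\lambda Y^*}(\sigma)$ is at least
$1+\big(\Rob(\rho)-\lambda^{-1}\beta^{-1}S(\rho)\big)/\big(1+\lambda^{-1}\beta^{-1}S(\tau^\beta_{\lambda Y^*})\big)$, which is a positive quantity. Taking reciprocals reverses the inequality. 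Rewriting the reciprocal over the common denominator $1+\lambda^{-1}\beta^{-1}S(\tau^\beta_{\lambda Y^*})$ gives
\begin{align}
\frac{1+\lambda^{-1}\beta^{-1}S(\tau^\beta_{\lambda Y^*})}{1+\Rob(\rho)+\lambda^{-1}\beta^{-1}\big(S(\tau^\beta_{\lambda Y^*})-S(\rho)\big)},
\end{align}
which is Eq.~\eqref{eq:QThermoCost}.

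For the zero-temperature limit, I will let $\beta^{-1}\to 0$ at fixed $\lambda$. The entropic corrections vanish, which yields $\xi_\cost(\rho)\leq 1/(1+\Rob(\rho))$, mirroring Eq.~\eqref{eq:QThermoAdvantageIdeal}.

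The main obstacle is making sure the reciprocal step is legitimate. It requires both $W^\out_{\lambda Y^*}(\rho)>0$ and $\max_\sigma W^\out_{\lambda Y^*}(\sigma)>0$. The first holds because the lower bound from Theorem~\ref{TheoremAdvantage} is at least $1$ under the stated condition on $\lambda$. For the second, if the free states' maximal work vanished, then every free state would coincide with the thermal state. In that degenerate case the ratio inside $\xi_\cost$ at this $H$ is zero, and the bound holds trivially. I will note this case separately, so that the reciprocal argument only needs to be applied when both works are strictly positive.
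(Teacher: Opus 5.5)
Your proposal is correct and matches the paper's argument. Because $W^{\cost}_H$ and $W^{\out}_H$ are the same functional, the ratio inside $\xi_{\cost}$ at $H=\lambda Y^*$ is the reciprocal of the fixed-$H$ ratio bounded in the proof of Theorem~\ref{TheoremAdvantage} (equivalently, $\xi_{\cost}(\rho)=1/\xi_{\out}(\rho)$); that is all the paper means by saying the corollary follows from that theorem. One small repair: in your degenerate case $\max_{\sigma\in\Free}W^{\out}_{\lambda Y^*}(\sigma)=0$ the ratio bound cannot supply $W^{\out}_{\lambda Y^*}(\rho)>0$, but your condition on $\lambda$ gives it directly, since $W^{\out}_{\lambda Y^*}(\rho)\geq\lambda-F_{\lambda Y^*}(\tau^\beta_{\lambda Y^*})$ and the right-hand side is strictly positive at finite temperature (compare $F_{\lambda Y^*}(\tau^\beta_{\lambda Y^*})$ with $F_{\lambda Y^*}(\sigma)\leq\lambda-\beta^{-1}S(\sigma)$ for any free $\sigma$, using that the thermal state is full rank).
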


Corollary~\ref{CorollaryCost} implies that, for any resource state with robustness $\Rob(\rho)$, there exists protocols that can prepare any free state $\sigma \in \Free$ at lower work cost. For very resourceful pure states $\rho$, where the robustness can increase with the dimension of the Hilbert space, the extra cost to prepare such a state is exponentially large in system size.

Similar results constrain the thermodynamic costs of implementing non-thermal quantum operations that output resourceful states.
Consider, for concreteness, a resourceless and pure input state $\sigma_{\text{in}} \in \Free$. $\sigma_{\text{in}}$ could be, for instance, the input state to a quantum circuit where $N$ qubits are initialized in the $\ket{0}$ state in the computational basis, $\sigma_{\text{in}} = (\ket{0}\!\bra{0})^{\otimes N}$. Such a state has no magic, entanglement, or coherence in the computational basis. 

Let $\mathcal{E}$ be a resource quantum channel capable of generating non-classicality, and let $\Omega \in \mathcal{M}_\Free$ denote a channel that takes free states into free states, belonging to the set $\mathcal{M}_\Free$ of resource non-increasing maps, with respect to the resource theory defined by the set $\Free$. 
Then, the parameter
\begin{align}
\eta_{\textnormal{cost}}\big( \mathcal{E} (\sigma_{\text{in}}) \big) \coloneqq 
    \min_{H \geq 0}  \, \max_{\Omega \in \mathcal{M}_\Free } \frac{F_H \big( \Omega(\sigma_{\text{in}}) \big) - F_H (\sigma_{\text{in}})}{F_H \big( \mathcal{E}(\sigma_{\text{in}}) \big) - F_H (\sigma_{\text{in}})} 
\end{align}
compares the thermodynamic cost to implement the most costly free channel $\Omega$ to the cost to implement the resource-generating channel $\mathcal{E}$, minimizing over all protocols with fixed Hamiltonian $H$.

\begin{theorem}[Thermodynamic costs of non-classical state generation]
\label{TheoremCostChannel} 
Consider a resourceless and pure input state $\sigma_{\text{in}} \in \Free$. Let $\Omega \in \mathcal{M}_\Free$ be a free operation, and let $\mathcal{E}$ be quantum channel that can generate resources. 
The work cost of implementing any free operation versus the work cost of outputting the resourceful state $\mathcal{E} \big( \sigma_{\text{in}} \big)$ given access to a thermal bath at inverse temperature $\beta$ satisfies
\begin{align}
   \eta_{\textnormal{cost}}\big( \mathcal{E} (\sigma_{\text{in}}) \big) & \leq  \frac{1}{  \Rob \big( \mathcal{E}(\sigma_{\text{in}}) \big) - \lambda^{-1} \beta^{-1} S\big( \mathcal{E}(\sigma_{\text{in}}) \big)   }, 
\end{align}
provided  that $\lambda \geq \beta^{-1} S\big( \mathcal{E}(\sigma_{\text{in}}) \big) / \Rob \big( \mathcal{E}(\sigma_{\text{in}}) \big)$. 
Here, $\lambda$ is a multiplicative Hamiltonian parameter, $H = \lambda Y^*$, where $Y^*$ is an optimal robustness witness. 

For zero-entropy output states $\mathcal{E}(\sigma_{\text{in}})$ or at zero temperature, 
\begin{align}
   \eta_{\textnormal{cost}}\big( \mathcal{E} (\sigma_{\text{in}}) \big) \leq \frac{1}{\Rob \big( \mathcal{E}(\sigma_{\text{in}}) \big) }.
\end{align}
\end{theorem}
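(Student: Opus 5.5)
The approach is to upper bound the minimum over $H\geq 0$ by evaluating it at one Hamiltonian, $H=\lambda Y^*$, where $Y^*\geq 0$ is an optimal witness for $\rho_{\out}\coloneqq\mathcal{E}(\sigma_{\text{in}})$ in Eq.~\eqref{eq:SDP}. This is the same choice used in the proof of Theorem~\ref{TheoremAdvantage}. The witness constraints give $0\leq \tr{H\sigma}\leq\lambda$ for every $\sigma\in\Free$, and $\tr{H\rho_{\out}}=\lambda\big(1+\Rob(\rho_{\out})\big)$. It then remains to bound the numerator uniformly over free operations $\Omega\in\mathcal{M}_\Free$ and to bound the denominator from below.

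For the numerator, note that $\Omega(\sigma_{\text{in}})\in\Free$ because $\Omega$ maps free states to free states. Hence $F_H\big(\Omega(\sigma_{\text{in}})\big)\leq \tr{H\Omega(\sigma_{\text{in}})}\leq\lambda$, since the entropy term only subtracts. Because $\sigma_{\text{in}}$ is pure, $S(\sigma_{\text{in}})=0$, so $F_H(\sigma_{\text{in}})=\lambda\tr{Y^*\sigma_{\text{in}}}\geq 0$. The numerator is therefore at most $\lambda-\lambda\tr{Y^*\sigma_{\text{in}}}\leq\lambda$, uniformly in $\Omega$, so the inner maximum is handled.

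For the denominator, the same facts give
$$F_H(\rho_{\out})-F_H(\sigma_{\text{in}})\geq \lambda\big(1+\Rob(\rho_{\out})\big)-\beta^{-1}S(\rho_{\out})-\lambda\tr{Y^*\sigma_{\text{in}}}.$$
Using $\tr{Y^*\sigma_{\text{in}}}\leq 1$ (since $\sigma_{\text{in}}\in\Free$), this is at least $\lambda\Rob(\rho_{\out})-\beta^{-1}S(\rho_{\out})$. The hypothesis $\lambda\geq\beta^{-1}S(\rho_{\out})/\Rob(\rho_{\out})$ makes this quantity nonnegative, so dividing is legitimate. Dividing the numerator bound $\lambda$ by this gives the claimed $1/\big(\Rob-\lambda^{-1}\beta^{-1}S\big)$.

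The main obstacle is the sign bookkeeping. The ratio bound needs the denominator to be strictly positive, which requires the inequality in the hypothesis to be strict (or $\Rob>0$ with $S$ small enough). The numerator could also be negative for some $\Omega$, but that only helps, since we are bounding from above. A cruder route that discards the $-\lambda\tr{Y^*\sigma_{\text{in}}}$ term in the numerator while keeping it in the denominator would break the argument, so the term must be dropped consistently: bound it by $0$ in the numerator and by $1$ in the denominator. The final case is immediate. For $S(\rho_{\out})=0$ or $\beta\to\infty$ the entropic correction vanishes, leaving $\eta_{\cost}\leq 1/\Rob(\rho_{\out})$.
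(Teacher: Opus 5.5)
Your proof is correct and is essentially the paper's own argument. Like the paper, you fix $H=\lambda Y^*$ and bound the numerator by $\lambda$, using $\Omega(\sigma_{\text{in}})\in\Free$, $\tr{Y^*\sigma_{\text{in}}}\geq 0$ and $S\geq 0$. You then bound the denominator below by $\lambda\Rob(\mathcal{E}(\sigma_{\text{in}}))-\beta^{-1}S(\mathcal{E}(\sigma_{\text{in}}))$, using $S(\sigma_{\text{in}})=0$ and $\tr{Y^*\sigma_{\text{in}}}\leq 1$.

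Your observation that the hypothesis must hold strictly for the division to be meaningful is a fair refinement that the paper glosses over. Your aside about a ``cruder route'' is backwards, though it is not load-bearing. Keeping the $-\lambda\tr{Y^*\sigma_{\text{in}}}$ term in the denominator is still valid and only tightens the bound. What would actually fail is dropping that term from the denominator, since that overestimates the denominator.
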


\begin{proof}
Let $Y^*$ be an optimal resource witness for the output state $\mathcal{E}(\sigma_{\text{in}})$, so that $\Rob\big( \mathcal{E}(\sigma_{\text{in}}) \big) =  \tr{Y^* \mathcal{E}(\sigma_{\text{in}})} - 1$. Since $Y^* \geq 0$,  and $\sigma_{\text{in}}$ and $\Omega \big( \sigma_{\text{in}} \big)$ are free states, it holds that $0 \leq \tr{\sigma_{\text{in}} Y^*} \leq 1$ and $0 \leq \tr{ \Omega \big( \sigma_{\text{in}} \big) Y^*} \leq 1$. Then, 
\begin{align}
&   \eta_{\textnormal{cost}}\big( \mathcal{E} (\sigma_{\text{in}}) \big)  \leq 
   \max_{\Omega \in \mathcal{M}_\Free } \frac{F_{\lambda Y^*} \big( \Omega(\sigma_{\text{in}}) \big) - F_{\lambda Y^*} (\sigma_{\text{in}})}{F_{\lambda Y^*} \big( \mathcal{E}(\sigma_{\text{in}}) \big) - F_{\lambda Y^*} (\sigma_{\text{in}})}   \nonumber \\
    & \qquad \leq   \max_{\Omega \in \mathcal{M}_\Free } \frac{\lambda - \beta^{-1} S\big( \Omega(\sigma_{\text{in}}) \big) }{ \lambda \Big( 1 + \Rob \big( \mathcal{E}(\sigma_{\text{in}}) \big) \Big)  - \beta^{-1} S\big( \mathcal{E}(\sigma_{\text{in}}) \big)  -  \lambda }  \nonumber \\
   & \qquad =   \frac{1}{  \Rob \big( \mathcal{E}(\sigma_{\text{in}}) \big) - \lambda^{-1} \beta^{-1} S\big( \mathcal{E}(\sigma_{\text{in}}) \big)   }. 
\end{align}
\end{proof}

For simplicity, consider pure output states or zero temperature, in which case $\eta_{\textnormal{cost}}\big( \mathcal{E} (\sigma_{\text{in}}) \big) \leq 1/ \Rob \big( \mathcal{E}(\sigma_{\text{in}}) \big)$.
Theorem~\ref{TheoremCostChannel} then shows that there exists some characteristic system Hamiltonian with respect to which the work cost to generate any resourceless state is cheaper than implementing the resourceful operation $\mathcal{E}$ on $\sigma_{\text{in}}$.

The previous result quantifies the costs of implementing $\mathcal{E}$ when acting on a particular input state. But, what about the thermodynamic costs of implementing the resourceful channel itself? How does it relate to the channel's resourcefulness?
We address this question next.

\subsection*{Thermodynamic costs of resourceful operations}

One way to implement $\mathcal{E}$ is as follows. 
Consider a bipartite system $AB$ with a tensor product structure on a Hilbert space $\mathcal{H}_A \otimes \mathcal{H}_B$, where $d = \dim(\mathcal{H}_A) = \dim(\mathcal{H}_B)$. Define the bipartite ($\mathcal{E}$-dependent) state 
\begin{align}
    J_{AB}^{\mathcal{E}} \coloneqq \left( \mathcal{E} \otimes \id_B  \right) \ket{\Phi}\!\bra{\Phi},
\end{align}
where $\ket{\Phi} \coloneqq \frac{1}{\sqrt{d}}\sum_{j=1}^{d} \ket{j}\!\otimes\!\ket{j}$. $J_{AB}^{\mathcal{E}}$ is the Choi state~\cite{watrous2018theory, frembs2024variations}, which characterizes the channel's action by $\mathcal{E}(\rho) = d \, \text{Tr}_B \left( \left(\id_A \otimes \rho^T \right) J_{AB}^\mathcal{E} \right)$, where $\rho^T$ denotes the transpose. Access to copies of $J_{AB}^\mathcal{E}$  allow implementing $\mathcal{E}$.

We use the work cost of preparing the Choi state $J_{AB}^\mathcal{E}$ from a thermal state,
\begin{align}
\label{eq:WorkProxyChannel}
    W_{H_{AB}}^{\cost}(\mathcal{E}) \coloneqq  F_{H_{AB}}\left( J_{AB}^\mathcal{E} \right) - F_{H_{AB}} \left( \tau_{H_{AB}}^\beta \right),
\end{align}
as a proxy for the implementation costs of $\mathcal{E}$, where $H_{AB}$ is a Hamiltonian on $AB$. $W_{H_{AB}}^{\cost}(\mathcal{E})$ quantifies the cost of implementing the channel via its Choi representation. (See also Ref.~\cite{faist2021thermodynamic}.)
Given a resourceful channel $\mathcal{E}$, consider
\begin{align}
\eta_{\textnormal{cost}}\big( \mathcal{E} \big) \coloneqq 
    \min_{H_{AB} \geq 0}  \,  \frac{ \max_{\Omega \in \mathcal{M}_\Free}  W_{H_{AB}}^{\cost} ( \Omega ) }{ W_{H_{AB}}^{\cost}(\mathcal{E})  },  
\end{align}
which compares the work required to implement (via Choi states) the most costly resourceless channel $\Omega \in \mathcal{M}_\Free$ to the cost of implementing the resourceful channel $\mathcal{E}$. $\eta_{\textnormal{cost}}\big( \mathcal{E} \big)$ provides a lower bound on the relative cost of the channels (i.e., other implementations could yield an even lower minimum relative cost).

As a last ingredient, the (generalized) robustness of a channel $\mathcal{E}$ is defined similarly to states~\cite{TakagiPRX, GourWinter2019, liu2019resourcetheoriesquantumchannels}: 
\begin{align}
\label{eq:RobustnessChannel}
\mathcal{R}_{\mathcal{M}_\Free}( \mathcal{E} )  &\coloneqq \min_{\kappa} 
\left\{ s \geq 0 \,\, \textnormal{s.t.} \, \,   \mathcal{E} \coloneqq \frac{\mathcal{E} + s \kappa }{1+s}  \in \mathcal{M}_\Free   \right\}.
\end{align}
Here, $\mathcal{M}_\Free$ defines a set of free channels and the minimization is over any possible channel $\kappa$. 
The dual formulation of the channel robustness is conveniently expressed in terms of the channel's Choi matrix as $\mathcal{R}_{\mathcal{M}_\Free} \left( \mathcal{E} \right) = \max_{ Z \geq 0} \left\{  \tr{ Z  J_{AB}^{\mathcal{E}} } - 1 \right\}$, where the witnesses $Z$ acting on $\mathcal{H}_A \otimes \mathcal{H}_B$ satisfy $\tr{ Z J_{AB}^{\Omega} } \leq 1$ for free $\Omega \in \mathcal{M}_\Free$~\cite{TakagiPRX}. 


Then, our final result bounds the implementation costs of resourceful channels in terms of their robustness:  
\begin{theorem}[Thermodynamic costs of non-classical channels]
\label{TheoremCostChannelGeneral}
Let $\mathcal{E}$ be resourceful quantum channel. The work cost of implementing any free $\Omega \in \mathcal{M}_\Free$ versus the work cost of implementing $\mathcal{E}$ given access to a thermal bath at inverse temperature $\beta$ satisfies
\begin{align}
\label{eq:QThermoCosChannel}
\eta_{\textnormal{cost}} \big( \mathcal{E} \big) & \leq   \frac{ 1 + \lambda^{-1} \beta^{-1} S\big( \tau_{\lambda Z^*}^\beta \big)}{  1 + \mathcal{R}_{\mathcal{M}_\Free}\big( \mathcal{E} \big) + \lambda^{-1} \beta^{-1} \left(  S\big( \tau_{\lambda Z^*}^\beta \big)  - S( J_{AB}^\mathcal{E} ) \right) },
\end{align}
provided that $\lambda \geq \beta^{-1} S( J_{AB}^\mathcal{E} ) / \mathcal{R}_{\mathcal{M}_\Free}\big( \mathcal{E} \big)$~\footnote{The entropy is bounded by the Choi rank, $S\big( J_{AB}^\mathcal{E} \big) \leq \log \big( \textnormal{rank}  \big( J_{AB}^\mathcal{E} \big) \big)$, which often provides insight into properties of a process~\cite{ChoiRank2024}. This implies that the entropy is zero for unitary transformations.}.
Here, $Z^*$ is an optimal robustness witness for the channel, $\lambda$ is a multiplicative Hamiltonian parameter that influences energy scales in the preparation protocol, and $\tau_{\lambda Z^*}^\beta$ is the corresponding thermal state. 
At zero temperature,
\begin{align}
\label{eq:QThermoCostChannelIdeal}
\eta_{\textnormal{cost}} \big( \mathcal{E} \big) & \leq   \frac{ 1  }{  1 + \mathcal{R}_{\mathcal{M}_\Free}\big( \mathcal{E} \big)  }.
\end{align}

\end{theorem}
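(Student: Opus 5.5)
The plan is to mirror the proof of Theorem~\ref{TheoremAdvantage} (equivalently Corollary~\ref{CorollaryCost}), with Choi states in place of states. Take the bipartite Hamiltonian $H_{AB} = \lambda Z^*$, where $Z^* \geq 0$ is an optimal witness in the dual formulation of $\mathcal{R}_{\mathcal{M}_\Free}(\mathcal{E})$. Since $\eta_{\textnormal{cost}}(\mathcal{E})$ is a minimum over $H_{AB} \geq 0$, this choice gives the upper bound
\begin{align}
\eta_{\textnormal{cost}}(\mathcal{E}) \leq \frac{\max_{\Omega \in \mathcal{M}_\Free} W^{\cost}_{\lambda Z^*}(\Omega)}{W^{\cost}_{\lambda Z^*}(\mathcal{E})}.
\end{align}

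For the denominator, the witness identity $\tr{Z^* J^{\mathcal{E}}_{AB}} = 1 + \mathcal{R}_{\mathcal{M}_\Free}(\mathcal{E})$ gives
\begin{align}
W^{\cost}_{\lambda Z^*}(\mathcal{E}) = \lambda\big(1+\mathcal{R}_{\mathcal{M}_\Free}(\mathcal{E})\big) - \beta^{-1}S(J^{\mathcal{E}}_{AB}) - F_{\lambda Z^*}(\tau^\beta_{\lambda Z^*}).
\end{align}
For the numerator, every free $\Omega$ satisfies $0 \leq \tr{Z^* J^\Omega_{AB}} \leq 1$. Dropping the nonnegative entropy $S(J^\Omega_{AB})$ then yields
\begin{align}
W^{\cost}_{\lambda Z^*}(\Omega) \leq \lambda - F_{\lambda Z^*}(\tau^\beta_{\lambda Z^*}),
\end{align}
uniformly in $\Omega$. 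So the maximum over $\mathcal{M}_\Free$ is controlled, and the ratio is bounded by
\begin{align}
\frac{\lambda - F}{\lambda(1+\mathcal{R}) - \beta^{-1}S(J^{\mathcal{E}}_{AB}) - F}, \qquad F \equiv F_{\lambda Z^*}(\tau^\beta_{\lambda Z^*}).
\end{align}

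Next, split $F = \lambda\,\tr{Z^*\tau^\beta_{\lambda Z^*}} - \beta^{-1}S(\tau^\beta_{\lambda Z^*})$ and write $z \equiv \lambda\,\tr{Z^*\tau^\beta_{\lambda Z^*}} \geq 0$. Set
\begin{align}
y \equiv \lambda + \beta^{-1}S(\tau^\beta_{\lambda Z^*}), \qquad x \equiv \lambda(1+\mathcal{R}) - \beta^{-1}S(J^{\mathcal{E}}_{AB}) + \beta^{-1}S(\tau^\beta_{\lambda Z^*}).
\end{align}
The ratio then has the form $(y-z)/(x-z)$, the reciprocal of the form used in Theorem~\ref{TheoremAdvantage}. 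The elementary inequality $(y-z)/(x-z) \leq y/x$ holds when $x \geq y$, $z \geq 0$ and $x - z > 0$. Applying it and dividing numerator and denominator by $\lambda$ gives exactly Eq.~\eqref{eq:QThermoCosChannel}.

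The main step needing care is verifying these hypotheses. The condition $\lambda \geq \beta^{-1}S(J^{\mathcal{E}}_{AB})/\mathcal{R}_{\mathcal{M}_\Free}(\mathcal{E})$ is precisely $x \geq y$. For $x - z > 0$, note that $x - z = W^{\cost}_{\lambda Z^*}(\mathcal{E}) \geq 0$, since any nonthermal state has positive cost; I would argue strictness as in the paper, or treat the degenerate case separately. At zero temperature the entropic terms vanish (formally $\beta^{-1}\to 0$), leaving $1/(1+\mathcal{R}_{\mathcal{M}_\Free}(\mathcal{E}))$. This is Eq.~\eqref{eq:QThermoCostChannelIdeal}.
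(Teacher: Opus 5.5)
Your proposal is correct and follows the paper's own route. The paper likewise sets $H_{AB}=\lambda Z^*$, uses $\tr{Z^* J^{\Omega}_{AB}}\le 1$ and $\tr{Z^* J^{\mathcal{E}}_{AB}}=1+\mathcal{R}_{\mathcal{M}_\Free}(\mathcal{E})$, and then defers to the derivation of Theorem~\ref{TheoremAdvantage}, whose fraction inequality $(x-z)/(y-z)\ge x/y$ is just the reciprocal of the $(y-z)/(x-z)\le y/x$ you apply.

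The strictness you flag needs no separate treatment. You have $x-z\ge y-z=\lambda-F_{\lambda Z^*}(\tau^\beta_{\lambda Z^*})$, and the two inequalities $F_{\lambda Z^*}(\tau^\beta_{\lambda Z^*})\le F_{\lambda Z^*}(J^{\Omega}_{AB})\le\lambda$ cannot both be equalities at finite temperature, since that would force the full-rank thermal state to equal a pure $J^{\Omega}_{AB}$.
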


\begin{proof}
We choose a particular Hamiltonian, $H_{AB} = \lambda Z^*$, where $Z^*$ is an optimal channel robustness witness. Then, using Eq.~\eqref{eq:WorkProxyChannel}, we get that
\begin{align}
\label{eq:aux-TheoremCostChannelGeneral}
\eta_{\textnormal{cost}}\big( \mathcal{E} \big) &\leq \frac{ \max_{\Omega \in \mathcal{M}_
    \Free}  W_{\lambda Z^*}^{\cost} ( \Omega ) }{ W_{\lambda Z^*}^{\cost}(\mathcal{E})  } \nonumber \\
    &= \max_{\Omega \in \mathcal{M}_
    \Free }    \frac{ F_{\lambda Z^*} \left( J_{AB}^\Omega \right) - F_{\lambda Z^*} \left( \tau_{\lambda Z^*}^\beta \right) }{ F_{\lambda Z^*} \left( J_{AB}^\mathcal{E} \right) - F_{\lambda Z^*} \left( \tau_{\lambda Z^*}^\beta \right) }.  
\end{align}
Equation~\eqref{eq:aux-TheoremCostChannelGeneral} is expressed in terms of states, with expressions similar to those in the proof of Theorem~\ref{TheoremAdvantage}. 
All resourceless channels $\Omega \in \mathcal{M}_\Free$ have Choi states that satisfy $\tr{Z^* J_{AB}^\Omega} \leq 1$, while $\tr{Z^* J_{AB}^\mathcal{E}} = \mathcal{R}_{\mathcal{M}_\Free}\big( \mathcal{E} \big) + 1$ for the resourceful state [see the dual formulation of the channel robustness after Eq.~\eqref{eq:RobustnessChannel}]. Following a similar derivation to that of Theorem~\ref{TheoremAdvantage} then shows Eq.~\eqref{eq:QThermoCosChannel}. 
\end{proof}
Theorem~\ref{TheoremCostChannelGeneral} bounds the thermodynamic costs of implementing a resourceful channel in terms of the channel robustness. More resourceful channels have a higher relative cost. 
(An immediate corollary of Theorem~\ref{TheoremCostChannelGeneral} is that resourceful channels can be leveraged for advantages in work-extraction, analogously to Theorem~\ref{TheoremAdvantage}.)

For instance, consider magic as the quantum resource, i.e., let $\Free_\Stab$ denote the set of stabilizer states. Circuits acting on $N$-qubits that implement channels $\Omega \in \mathcal{M}_{\Free_\Stab}$ are efficiently simulable by classical means. 
Quantum advantage thus requires circuits that implement resourceful channels $\mathcal{E}$.
Theorems~\ref{TheoremCostChannel} and~\ref{TheoremCostChannelGeneral} link the thermodynamic costs of implementing such resourceful channels to their resource content. The best case for quantum computing---one where the robustness is as high as possible, exponential in the number of qubits--- is the worst case in a thermodynamic sense: one where free states can be exponentially less energetically costly to prepare than the very magic ones.

\section{Discussion}
\label{sec:discussion}

Our results provide operational meanings to generalized resource robustness measures in a thermodynamic context.
Theorem~\ref{TheoremAdvantage} presents quantum resourceful states defined by the free set $\Free$ (e.g., states with magic, entanglement, or coherence, among others) as a thermodynamic resource. 
Our results do not mean that any protocol extracts more work from resourceful states. Instead, they inform protocols that, via quenches to a witness Hamiltonian $H \propto Y^*$, extract more work from a resourceful state $\rho$ than from any resourceless state $\sigma \in \Free$. The suggested protocols are tailored to the corresponding resource. The ratio of the extractable works is then characterized by the resource robustness $\Rob(\rho)$. 
Operationally, 
$\Rob(\rho)$ lower-bounds the optimized work-advantage ratio $\xi_\out(\rho)$ achievable by choosing 
$H = \lambda Y^*$ and running the quench/thermalization cycle.
Theorem~\ref{TheoremThermal} further shows that resources are exhausted when performing the optimal work extraction protocol from a pure resourceful state.

We illustrated the results for resourceful coherent or magic states, where optimal witnesses are known and can be expressed in terms of simple rank-1 projectors. Even though T states are a paradigmatic example of magic states, they have more computational-basis coherence than magic. As a result, the best work-extraction protocol is tailored to their coherence rather than their magic.
The only restriction we placed on the allowable Hamiltonians is that they are bounded from below. A cap on the maximum energy can limit the thermodynamic advantage for mixed input states, as can be seen from Theorem~\ref{TheoremAdvantage}. Placing extra constraints on the Hamiltonians, such as locality, would further bound the thermodynamic advantage, as the optimal witness may not be implementable then.

Corollary~\ref{CorollaryCost} and Theorems~\ref{TheoremCostChannel} and~\ref{TheoremCostChannelGeneral} counterbalance the aforementioned results, by comparing the thermodynamic costs of preparing resourceless and resourceful states. There always exists a protocol, defined by quenches with an optimal witness Hamiltonian, that prepares resourceless states at less cost than resourceful ones, with a ratio characterized by the robustness of the quantum resource state. More speculatively, Theorems~\ref{TheoremCostChannel} and~\ref{TheoremCostChannelGeneral} could have implications for quantum algorithms with the potential for quantum advantage, which require generating magic and entanglement. The theorems imply that there exist protocols that implement non-magic or non-entangling operations at a cheaper thermodynamic cost, potentially motivating the study of classically simulable (magic- and/or entanglement- free) circuits that incur lower thermodynamic costs than the quantum, magic/entangling generating ones under some Hamiltonian choices.


\section*{Acknowledgments}
This work was supported by the U.S. Department of Energy (DOE), Office of Science, Basic Energy Sciences program (award No. DE-SCL0000157) 
and the 
Advanced Scientific Computing Research program under project TEQA. 
LPGP also acknowledges support from the U.S. DOE, Office of Science, National Quantum Information Science Research Centers, Quantum Science Center.

\bibliography{main}

\end{document}